\newtheorem{defi}{Definition}[section]
\newtheorem{prop}{Proposition}
\newtheorem{cor}{Corollary}
\newcommand{\cell}[2]{\left(#1,#2\right)}
\newcommand{\xvar}[3]{X\left(#1,#2,#3\right)}
\newcommand{\yvar}[4]{Y\left(#1,#2,#3,#4\right)}
\newcommand{\zvar}[1]{Z\left(#1\right)}
\newcommand{\assign}[3]{#1\left(#2,#3\right)}
\newcommand{\uvar}[2]{U\left(#1,#2\right)}
\title{Exact Method for Generating Strategy-Solvable Sudoku Clues}
\author[K.\ Nishikawa]{Kohei Nishikawa}
\address[K.\ Nishikawa]{Graduate School of Information Systems, University of Electro-Communications, 1-5-1 Chofugaoka, Chofu, Tokyo 182-8585, Japan.}
\author[T.\ Toda]{Takahisa Toda}
\address[T.\ Toda]{Graduate School of Information Systems, University of Electro-Communications, 1-5-1 Chofugaoka, Chofu, Tokyo 182-8585, Japan.}
\email[Corresponding author]{todat@acm.org}
\keywords{Sudoku; Constraint Satisfaction Problem; Strategy-Solvability; Exact Method; Mathematics of Sudoku; Constraints of Clue Geometry; Strategy-solvable minimum Sudoku}
\begin{document}

\maketitle

\begin{abstract}
A Sudoku puzzle often has a regular pattern in the arrangement of initial digits and it is typically made solvable with known solving techniques, called strategies.
In this paper, we consider the problem of generating such Sudoku instances.
We introduce a rigorous framework to discuss solvability for Sudoku instances with respect to strategies.
This allows us to handle not only known strategies but also general strategies under a few reasonable assumptions.
We propose an exact method for determining Sudoku clues for a given set of clue positions that is solvable with a given set of strategies.
This is the first exact method except for a trivial brute-force search.
Besides the clue generation, we present an application of our method to the problem of determining the minimum number of strategy-solvable Sudoku clues.
We conduct experiments to evaluate our method, varying the position and the number of clues at random.
Our method terminates within $1$ minutes for many grids.
However, as the number of clues gets closer to $20$, the running time rapidly increases and exceeds the time limit set to $600$ seconds.
We also evaluate our method for several instances with $17$ clue positions taken from known minimum Sudokus to see the efficiency for deciding unsolvability.
\end{abstract}

\section{Introduction}\label{sec:introduction}
Sudoku is a popular number-placement puzzle.
In an ordinary Sudoku (Figure~\ref{fig:sudoku}), given a partially completed $9\times 9$ grid, the goal is to fill in all empty cells with digits from $1$ to $9$ in such a way that each cell has a single digit, and each digit appears only once in every row, column, and $3\times 3$ subgrid.

Sudokus that appear in books, newspapers, etc often have the following characteristics.
\begin{enumerate}
\item The arrangement of initial digits (called \emph{clues}) forms a regular pattern.
\item The level of difficulty is moderate.
\end{enumerate}

\begin{figure}[H]
  \centering
  \includegraphics[keepaspectratio, width=6cm]{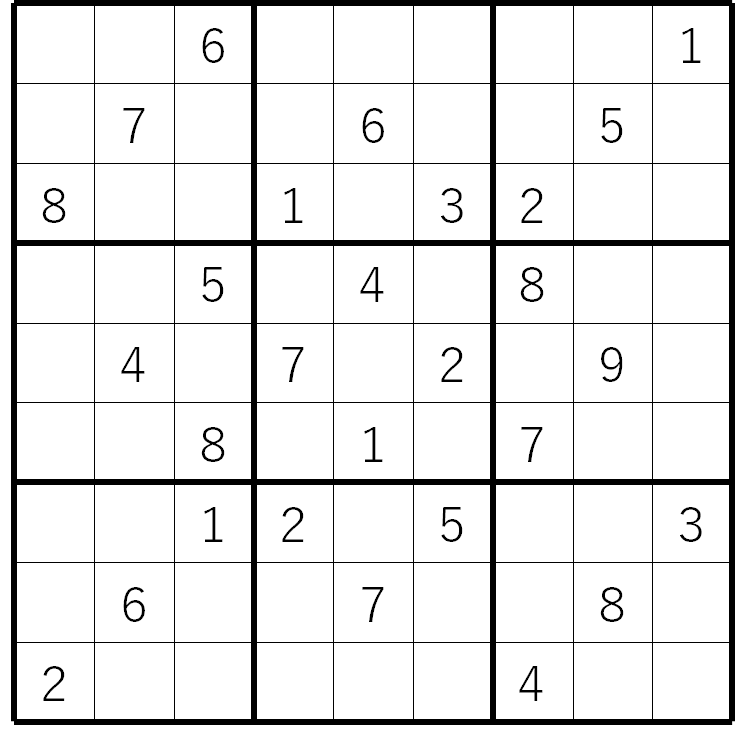}
  \includegraphics[keepaspectratio, width=6cm]{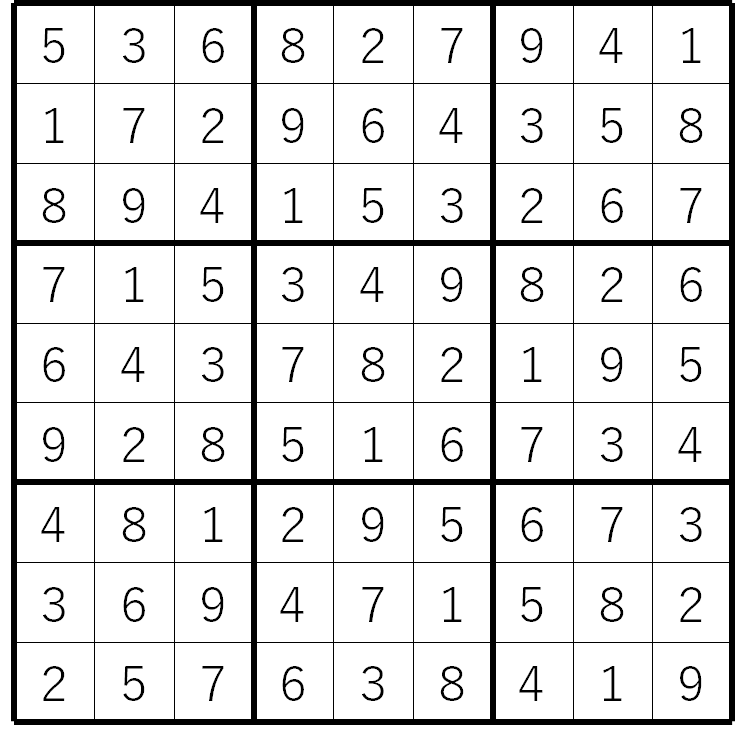}
  \caption{Sudoku clues (left) and its solution (right)}
  \label{fig:sudoku}
\end{figure}

Regarding~1, for example, the left grid of Figure~\ref{fig:sudoku} has reflection symmetry on the two diagonal axis.
It may also has the shape of a number, a letter, or a symbol.
Regarding~2, completing Sudokus generally requires backtracking, which is amenable to computers.
On the other hand, there is a set of techniques (called \emph{strategies}) that humans use in solving Sudokus by hand~\cite{mathsudoku}.
Typically strategies are if-then rules:
If a strategy is applicable to the current grid, then it might rule out digits as candidates or might determine digits as those to be finally placed.
A \emph{strategy-solvable Sudoku} is a partially completed grid that can be completed by applying strategies repeatedly.
Here, the strategies must be selected from a set of predetermined ones.
Since there is no need to guess, strategy-solvable Sudokus (at least for a few basic strategies) might be referred to as Sudokus having the moderate level of difficulty for humans.

Making new Sudoku instances having these characteristics is not an easy task.
There is a public software~\footnote{The generator of Zama and Sasano~\cite{weko_157970_1}~\url{http://www.cs.ise.shibaura-it.ac.jp/2016-GI-35/SudokuGenerator.tar.gz} , accessed May 26th, 2020.}, which is able to generate Sudokus from a given set of clue positions so that they are solvable with some known strategies.
However, it is still hard to generate those with $20$ or less clues and those with around $45$ or more clues.
For such instances, there might be no other choice but to rely on human intelligence involving the intuition, the inspiration and the experience of enthusiasts.

In this paper, we consider a method for determining Sudoku clues in specified positions such that all empty cells can be filled in with a specified set of strategies (see Figure~\ref{fig:scg}).

\begin{figure}[H]
  \centering
  \includegraphics[keepaspectratio, width=6cm]{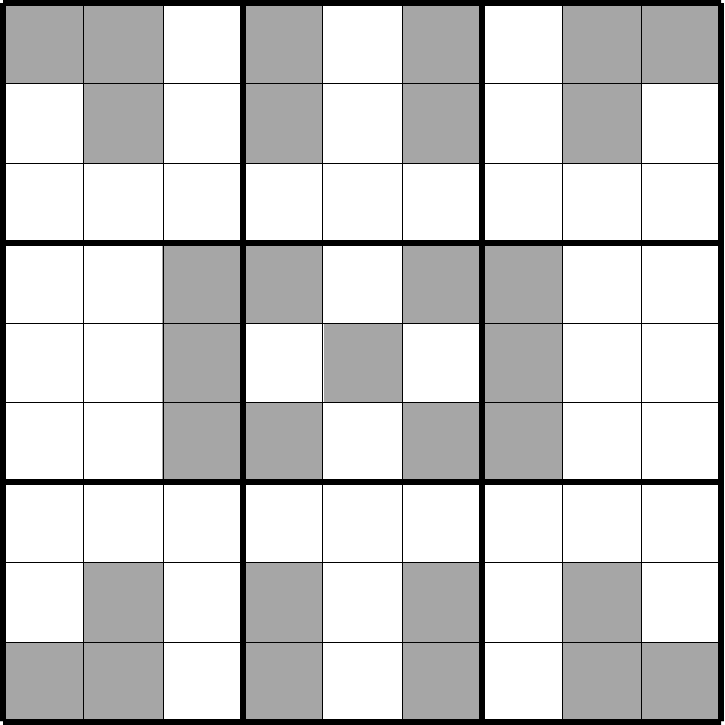}
  \includegraphics[keepaspectratio, width=6cm]{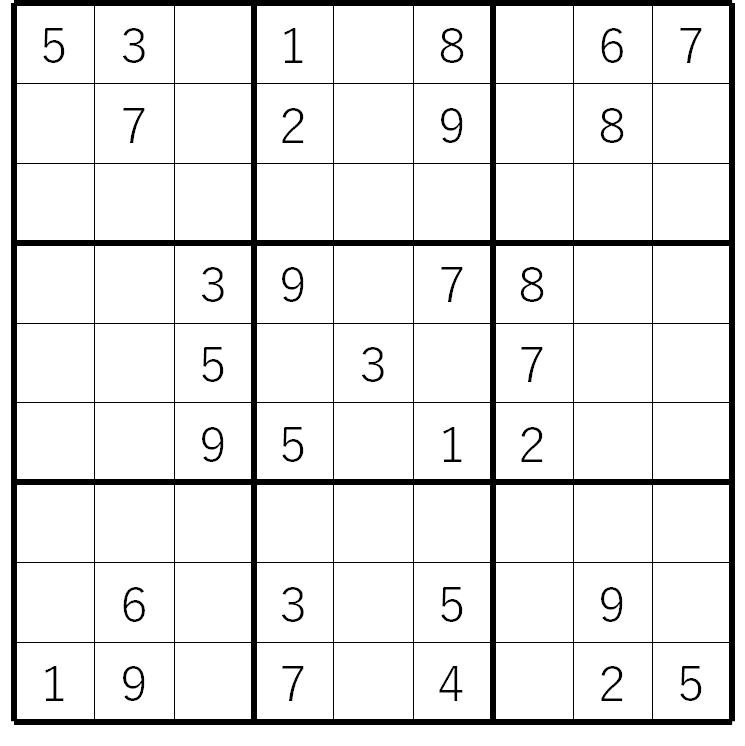}
  \caption{Clue positions (left) and clues solvable with naked singles only (right)}
  \label{fig:scg}
\end{figure}

Almost all Sudoku generators simply output proper Sudoku instances (i.e.\ partially completed grids having unique solutions).
Clue arrangements and strategies necessary for solving are uncontrollable.
To the best of our knowledge, the only exception is the generator of Zama and Sasano, mentioned earlier.
Their generator is based on the generate-and-test method, which repeats the followings until the test is passed or the number of trials exceeds a predetermined limit.
\begin{enumerate}
\item Generate clues in specified positions.
\item Test whether all the other cells are completed using specified strategies only.
\end{enumerate}
The same idea is also mentioned in~\cite{weko_138032_1}.
Since the test step can be done quickly, the key is to devise a criterion for the generation step so that the number of trials is as small as possible.

Since the generate-and-test method examines only a limited portion of the vast search space, it has the drawbacks listed below.
\begin{itemize}
\item The lower the density of solutions (i.e.\ strategy-solvable clues) over the whole search space becomes, the harder it becomes to find a solution.
\item If a set of clue positions happens to be not strategy-solvable, it is unable to recognize it no matter how much time passes.
\item Even for a strategy-solvable set of clue positions, there is no guarantee for being able to find a solution in a finite amount of time.
\end{itemize}

To tackle these issues (in particular the last two), we consider it necessary to formulate the concept of strategy-solvability.
It seems that strategy-solvability has been recognized intuitively, and no formal treatment has been given so far.
In this paper, we introduce a rigorous framework to discuss solvability for Sudoku instances with respect to strategies and define the notion of strategy-solvable Sudoku clues.
This allows us to handle not only known strategies but also general strategies under a few reasonable assumptions.
We then propose an exact method for determining strategy-solvable Sudoku clues for a given set of clue positions that is solvable with a given set of strategies.
The key is a reduction to a constraint satisfaction problem (CSP).
Our method is able to benefit from the power of state-of-the-art CSP solvers.
Pruning techniques of CSP solvers are expected to be effective for the first issue above.
Moreover, as long as any complete CSP solver is utilized, it is guaranteed that if a set of clue positions is not strategy-solvable, our method eventually recognizes the unsolvability; otherwise, our method eventually finds strategy-solvable clues.

This is the first exact method except for a trivial brute-force search.
As strategies allowed in completing Sudokus, this paper employs naked singles, hidden singles, and locked candidates~\cite{mathsudoku}, which are quite basic yet enough powerful to complete many Sudokus.
Indeed, we have confirmed that the combination of the three strategies allows for solving as many as $37,373$ of the $49,151$ minimum Sudokus (i.e.,\ Sudokus with $17$ clues) collected by Gordon F.\ Royle~\cite{minimumsudoku}.
Here we remark that our CSP formulation is almost independent of specific strategies.
In order to allow other strategies, it is sufficient to formulate the corresponding logical constraints and add them with a minor modification of the strategy-independent part.

Besides the clue generation, we present an application of our method to the problem of determining the minimum number of Sudoku clues that are solvable with a given set of strategies.
We demonstrate that our method is easily customized for this problem with a small modification in the CSP constraints.
It is well-known that there is no proper Sudoku with $16$ or less clues~\cite{There16}.
There are many Sudokus with $17$ clues that are solvable with basic strategies such as hidden singles.
Interestingly, this may not be the case for naked singles, arguably one of the most basic strategy, as we have confirmed that no grid solvable with only naked singles is included in the minimum Sudoku collection.
This poses an open problem of whether there is a gap between the minimum numbers for strategy-solvable Sudokus and proper Sudokus.
Our method will be useful in tackling this problem thanks to the ability of determining unsolvability.

We conduct experiments to compare our method with the generator of Zama and Sasano~\cite{weko_157970_1}, using grids varying the position and the number of clues at random.
From the results, we observe that our method terminates within $1$ minutes for many instances, showing our method being stable in terms of running time.
However, as the number of clues gets closer to $20$, the running time rapidly increases and exceeds a time limit.
On the other hand, the generator of Zama and Sasano often can find solutions much faster even in near $20$ clues, while the performance sharply deteriorates around $45$ clues and exceeds the time limit for all grids with more clue positions.
Perhaps this is due to the exponential blow-up of the search space (in other words, the exponential decline in the density of solutions).
We also evaluate our method for several instances with $17$ clue positions taken from known minimum Sudokus to see the efficiency for deciding unsolvability.

The main contributions of the present paper are to formulate the concept of strategy-solvability, to establish an exact method for the strategy-solvable Sudoku clues problem, and to demonstrate the flexibility of the CSP-based approach.
It remains as future work to improve our method in less clues.

Unless otherwise noted, the size of Sudoku is fixed to $9\times 9$ throughout the paper.
This is simply for convenience and our method can be easily translated into general Sudokus on $n^2\times n^2$ grid.

The paper is organized as follows.
Section~\ref{sec:preliminaries} introduces necessary notations, terminology, and explains strategies.
Section~\ref{sec:formulation} formulates the concept of strategy-solvability and the strategy-solvable Sudoku clues problem.
Section~\ref{sec:ourmethod} proposes an exact method for the strategy-solvable Sudoku clues problem, and Section~\ref{sec:improvements} presents two improvements for our method.
Section~\ref{sec:application} presents an application to the strategy-solvable minimum Sudoku problem.
Section~\ref{sec:experiments} presents experimental results.
Section~\ref{sec:conclusion} concludes this paper.

\section{Preliminaries}\label{sec:preliminaries}
In this section we introduce necessary notations and terminology, and we explain strategies.

\subsection{Notations and Terminology}
For convenience, rows and columns are numbered from $0$ to $8$.
A cell is denoted by the pair $\cell{i}{j}$ of a row index $i$ and a column index $j$.
The $9$ subgrids of $3\times 3$ are called \emph{blocks}.
Rows, columns, and blocks are collectively called \emph{groups}.
A group is identified with the set of all cells in the group.
By abuse of notation, we denote by $G\setminus \cell{i}{j}$ the difference of a singleton $\{\cell{i}{j}\}$ from a group $G$.

Given a partially completed grid, the goal of a Sudoku puzzle is to fill in all empty cells with digits from $1$ to $9$ in such a way that
\begin{itemize}
\item each cell has a single digit, and
\item each digit appears only once in every group.
\end{itemize}
The completed grid is called a \emph{solution}.
A \emph{Sudoku} is a convenient alias for a partially completed grid given as an initial grid.
A Sudoku is \emph{proper} if it has a unique solution.
The occurrence of a digit in an initial grid is called a \emph{clue}, and \emph{Sudoku clues} are the clues in an initial grid.
A cell in an initial grid to which some digit is designated to be placed as a clue is called a \emph{clue cell} or a \emph{clue position}.

\subsection{Strategies}
There is a set of techniques (called \emph{strategies}) that humans use in solving Sudokus by hand~\cite{mathsudoku}.
Typically strategies are if-then rules:
If a strategy is applicable to the current grid, then it might rule out digits as candidates or might determine digits as those to be finally placed.
Throughout the following explanation, let $n$ be a nonzero digit and $\cell{i}{j}$ be a cell.

A \emph{naked single} is a strategy that places $n$ in $\cell{i}{j}$ if no other candidate but $n$ remains at $\cell{i}{j}$.
For example, let us look at the gray cell $\cell{4}{7}$ in the left grid of Figure~\ref{fig:unsolvable}.
Since all digits but $5$ appear in either group having $\cell{4}{7}$, these digits must be ruled out and only $5$ remains.
Hence, $5$ is placed in $\cell{4}{7}$.
Starting with the left grid of Figure~\ref{fig:unsolvable}, one can reach the right grid using only naked singles, but cannot proceed any more because of two or more candidates over all empty cells.

\begin{figure}
  \centering
  \includegraphics[keepaspectratio, width=6cm]{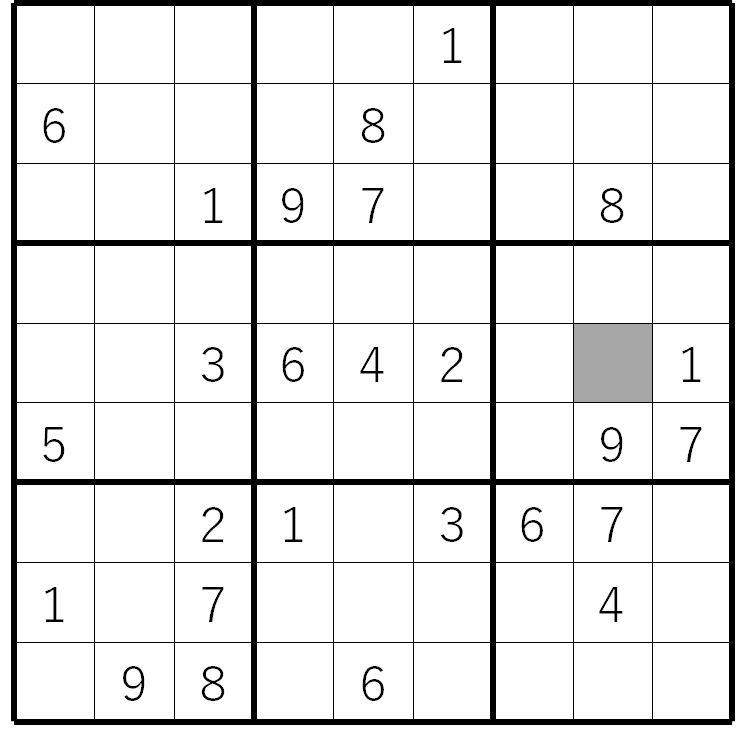}
  \includegraphics[keepaspectratio, width=6cm]{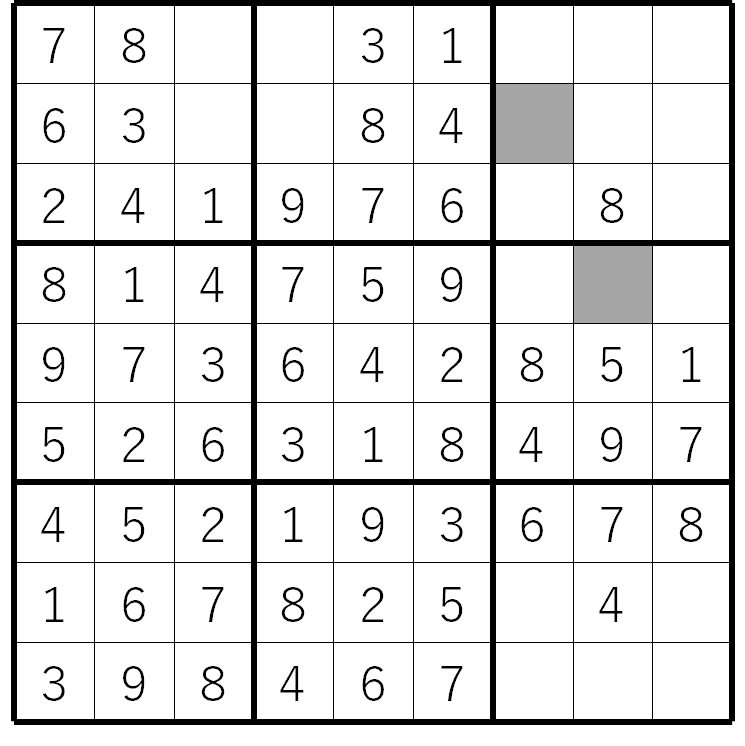}
  \caption{Starting with the partially completed grid on the left, one can reach the grid on the right using only naked singles but cannot proceed any more.}
  \label{fig:unsolvable}
\end{figure}

A \emph{hidden single} is a strategy that places $n$ in $\cell{i}{j}$ if there is a group $G$ having $\cell{i}{j}$ such that no cell in $G\setminus\cell{i}{j}$ has $n$ as a candidate.
For example, let us look at the right grid of Figure~\ref{fig:unsolvable}.
For the row of index $1$, we can observe that every cell in the row except for the gray cell $\cell{1}{6}$ does not have $7$ as a candidate.
Indeed, for any such cell, $7$ already appears in another cell of the same column.
Hence, a hidden single strategy determines $7$ in $\cell{1}{6}$.
In this way, by applying hidden singles repeatedly, the grid can be completed.

Let $A,B$ be groups such that $|A\cap B|=3$.
A \emph{locked candidate} is a strategy that rules out $n$ over all cells in one difference set $B\setminus A$ if no cell in the other difference set $A\setminus B$ has $n$ as a candidate.
For example, let us look at the right grid of Figure~\ref{fig:unsolvable}.
Let $A$ be the column of index $7$, and let $B$ be the block adjacent,  on the right, to the center block.
No empty cell in $A\setminus B$ has $3$ as a candidate because for each such cell, there is another cell in the same row to which $3$ is already placed.
Hence, $3$ is ruled out for all empty cells in $B\setminus A$.
Because of this, $2$ becomes a unique candidate at $\cell{3}{6}$.
It is determined as the digit at $\cell{3}{6}$ by a naked single.
In this way, the grid also can be completed using locked candidates as well as naked singles.

\section{Formulation}\label{sec:formulation}
Although some known strategies are introduced in the previous section and some solvable cases are explained using particular grids, there are still unclear points such as what is a strategy in general and what operations are allowed for placing or ruling out digits, and so on.
In this section, we thus introduce a rigorous framework to discuss solvability for Sudoku instances.
This allows us to handle not only known strategies but also general strategies under a few reasonable assumptions.
This also makes it clear that, for example, it is not allowed to temporarily place digits and then cancel them when a contradiction occurs.
Such an inference would allow us to do backtracking, which nullifies the restriction of completion methods.

\begin{figure}[H]
\centering
\includegraphics[keepaspectratio, width=2.4cm]{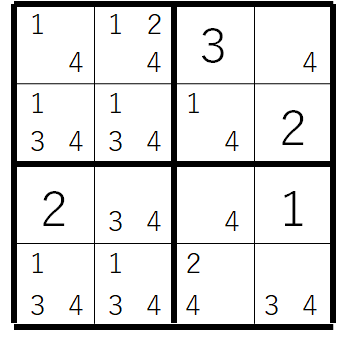}
\includegraphics[keepaspectratio, width=2.4cm]{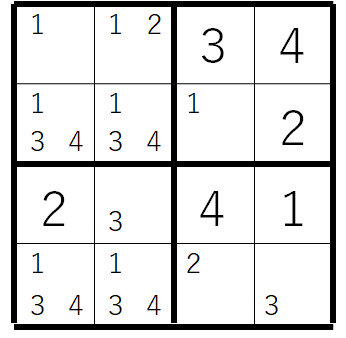}
\includegraphics[keepaspectratio, width=2.4cm]{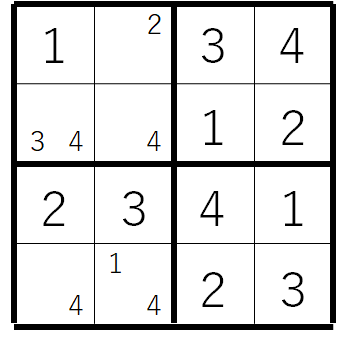}
\includegraphics[keepaspectratio, width=2.4cm]{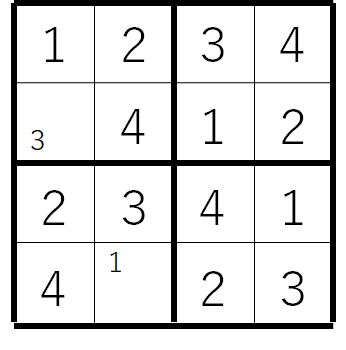}
\includegraphics[keepaspectratio, width=2.4cm]{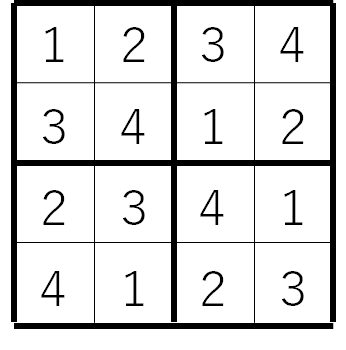}
\caption{Sequence of grids in $4\times 4$ Sudoku, evolving from the left-most initial state to the right-most final state.Larger digits are determined ones and smaller digits are candidates.}
\label{fig:transition}
\end{figure}

A \emph{state transition model} of Sudoku is a tuple $(Q, R, I, F)$ such that $Q$ is the set of all possible states, $R$ is a state transition relation, $I$ is the set of initial states, and  $F$ is the set of final states.
We will explain each component below.
Figure~\ref{fig:transition} shows how a grid evolves as strategies are applied, which will be used as an example in the succeeding explanation.

A \emph{state of grid} (or a \emph{state} in short) is a pair $\cell{f}{g}$ of functions.
The function $f$ maps the set of cells to the set of digits, and $\assign{f}{i}{j}=n$ means that if $n\not= 0$, then $n$ is placed in $\cell{i}{j}$; otherwise, no digit is placed.
The function $g$ maps the set of cells to the powerset of the set of nonzero digits, and $\assign{g}{i}{j}$ represents the set of all candidates at $\cell{i}{j}$.
Here, by abuse of notation, $\assign{f}{i}{j}$ and $\assign{g}{i}{j}$ denote $f\left (\cell{i}{j} \right )$ and $g\left (\cell{i}{j} \right )$, respectively.
All states must satisfy the conditions $S_1$ and $S_2$ below: for all cells $\cell{i}{j}$ and nonzero digits $n$, 
\begin{enumerate}
\item[$S_1$] $\assign{g}{i}{j}\not = \emptyset$,
\item[$S_2$] $n\not\in\assign{g}{i}{j}$ if
there is a different nonzero digit $n'$ from $n$ such that $\assign{f}{i}{j}=n'$ or
there is a different cell $\cell{i'}{j'}$ from $\cell{i}{j}$ with both in a common group such that $\assign{f}{i'}{j'}=n$.
\end{enumerate}
We denote by $Q$ the set of all states.

The antecedent of $S_2$ is a condition for digits $n$ that can be safely ruled out whenever some digits are placed.
Imposing $S_2$ ensures that all such digits are ruled out.
This is simply to make all states having a type of normal form, and it would not deduce any wrong consequence as long as decisions for number-placements are correct such as not violating the properness.
In Figure~\ref{fig:transition} we can observe that all grids are in "normal form".

A \emph{state transition relation} is a binary relation $R\subseteq Q\times Q$.
The $\cell{q}{q'}\in R$ (or $qRq'$ in infix notation) means that there are strategies by which $q$ is changed to $q'$.
We refer to $q$ and $q'$ as the \emph{source state} and the \emph{target state} of the transition, respectively.
All state transitions must satisfy that for any cell $\cell{i}{j}$ and nonzero digit $n$, if $\assign{f}{i}{j}=n$ holds in the source state, so does in the target state; if $n\not\in\assign{g}{i}{j}$ holds in the source state, so does in the target state.
This simply means that once digits are placed or ruled out, then it cannot be canceled afterward.
We put one more assumption as follows.
For any three states $q_1=\cell{f_1}{g_1}$, $q_2=\cell{f_2}{g_2}$, and $q_3=\cell{f_3}{g_3}$ with $q_1 R q_2$ and $q_2 R q_3$, if $g_1=g_2$, then $q_2=q_3$.
We consider this assumption not so strong.
Indeed, many strategies such as naked singles and hidden singles decide to place digits by examining only candidates.
For such strategies, suppose that $q_2$ is obtained from $q_1$ by applying all strategies applicable to $q_1$.
Then, if $g_1=g_2$, no strategy applicable to $q_2$ remains, and $q_2$ cannot be changed any more, i.e., $q_2=q_3$.

A state $q=\cell{f}{g}\in Q$ is an \emph{initial state} if it satisfies the followings: $\assign{f}{i}{j}\not = 0$ for all clue cells; $\assign{f}{i}{j}=0$ for the other cells; $n\not\in\assign{g}{i}{j}$ if and only if the antecedent of $S_2$ holds.
The last condition ensures that for each clue cell, all digits but the determined one are ruled out; all digits that appear as candidates in a common group to some determined digit are ruled out; (the only-if part) no other digits are ruled out.
Notice that the only-if part is the most essential because the if part can be omitted thanks to $S_2$.
By the way, for non-initial states, this must not be assumed in general because of strategies for number-removals such as locked candidates.
For example, in the left most grid of Figure~\ref{fig:transition} we can observe that all digits not designated to be ruled out by the antecedent of $S_2$ remain as candidates.
We denote by $I$ the set of all initial states.

A \emph{final state} is a state such that digits are determined for all cells, that is, $\assign{f}{i}{j}\not =0$ for all $\cell{i}{j}$.
We denote by $F$ the set of all final states.

\begin{defi}\label{def:solvability}
Let $(Q,R,I,F)$ be a state transition model of Sudoku.
A state $q_0$ is \emph{strategy-solvable} if there is a sequence of states $q_0,\ldots,q_k\in Q$ such that $q_0\in I$, $q_k\in F$, and $q_i R q_{i+1}$ for all $i\in\{0,\ldots,k-1\}$.
\end{defi}

\begin{defi}
Let $(Q,R,I,F)$ be a state transition model of Sudoku.
The \emph{Strategy-solvable Sudoku Clues} problem (or \emph{SSC} in short) is to determine whether there is a strategy-solvable state of $(Q,R,I,F)$.
\end{defi}

Suppose that $R$ is given as a logical formula that represents strategies and a set $S$ of clue cells is also given.
Consider $Q$, $I$, and $F$ as the ones defined as described above.
Sudoku clues are \emph{strategy-solvable} if the initial state corresponding to the Sudoku clues is strategy-solvable with respect to $(Q,R,I,F)$.
The strategy-solvable Sudoku clues problem is defined accordingly.
If the output of the problem is yes, in practice we will also ask for generating strategy-solvable Sudoku clues.

%
%

\begin{prop}\label{prop:state}
Let $q=\cell{f}{g}$ be a state of grid.
For any cells $\cell{i}{j}$, $\cell{i'}{j'}$ in a common group, $\assign{f}{i}{j}=\assign{f}{i'}{j'}$ implies $\cell{i}{j}=\cell{i'}{j'}$ or $\assign{f}{i}{j}=0$.
\end{prop}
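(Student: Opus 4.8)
The plan is to argue by contradiction using the consistency condition $S_2$. Suppose for contradiction that $\assign{f}{i}{j}=\assign{f}{i'}{j'}$, that $\cell{i}{j}$ and $\cell{i'}{j'}$ lie in a common group, that $\cell{i}{j}\not=\cell{i'}{j'}$, and that $\assign{f}{i}{j}\not=0$. Write $n:=\assign{f}{i}{j}=\assign{f}{i'}{j'}$, so $n$ is a nonzero digit placed in both cells.

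Next I would extract a contradiction from $S_2$ applied at the cell $\cell{i}{j}$ with the digit $n$. The second disjunct of the antecedent of $S_2$ is exactly met: there is a different cell $\cell{i'}{j'}$ from $\cell{i}{j}$, both in a common group, with $\assign{f}{i'}{j'}=n$. Hence $S_2$ forces $n\not\in\assign{g}{i}{j}$. On the other hand, I would show $n\in\assign{g}{i}{j}$, which contradicts the above. For this, apply condition $S_1$ together with $S_2$ at $\cell{i}{j}$: by $S_1$, $\assign{g}{i}{j}$ is nonempty, so it contains some nonzero digit $m$; by the first disjunct of the antecedent of $S_2$ (with $\assign{f}{i}{j}=n\not=0$), every nonzero digit different from $n$ is excluded from $\assign{g}{i}{j}$, so the only possible element is $n$ itself, forcing $m=n$ and thus $n\in\assign{g}{i}{j}$.

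Combining the two conclusions, $n\in\assign{g}{i}{j}$ and $n\not\in\assign{g}{i}{j}$, is the desired contradiction, so the assumption fails and the proposition holds.

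I do not anticipate a genuine obstacle here; the only point requiring slight care is making sure the two disjuncts of the antecedent of $S_2$ are invoked for the right purposes — the first disjunct (a conflicting placed digit in the same cell) is used to pin down $\assign{g}{i}{j}=\{n\}$, while the second disjunct (the same digit placed in a different cell of a common group) is used to exclude $n$ from $\assign{g}{i}{j}$. One should also note that $S_1$ is needed precisely to rule out the degenerate possibility $\assign{g}{i}{j}=\emptyset$, which would otherwise be vacuously consistent with everything.
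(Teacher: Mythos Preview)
Your proof is correct and follows essentially the same approach as the paper's: both argue by contradiction, use the second disjunct of $S_2$ to exclude $n$ from $\assign{g}{i}{j}$, use the first disjunct of $S_2$ to exclude every other nonzero digit, and invoke $S_1$ for the contradiction. The only cosmetic difference is that the paper derives $\assign{g}{i}{j}=\emptyset$ and then contradicts $S_1$, whereas you use $S_1$ first to force $n\in\assign{g}{i}{j}$ and then contradict $n\notin\assign{g}{i}{j}$.
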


\begin{proof}
Suppose that $\assign{f}{i}{j}=\assign{f}{i'}{j'}$, $\cell{i}{j}\not = \cell{i'}{j'}$, and $\assign{f}{i}{j}\not = 0$.
Let $n=\assign{f}{i'}{j'}$.
Applying $S_2$, we obtain $n\not\in\assign{g}{i}{j}$.
Since $n=\assign{f}{i}{j}$, applying $S_2$, we obtain $n'\not\in\assign{g}{i}{j}$ for all $n'\not= n$.
From both, $\assign{g}{i}{j}=\emptyset$ follows, and this is contradictory to $S_1$.
\end{proof}

\begin{cor}
Every final state represents a Sudoku solution.
\end{cor}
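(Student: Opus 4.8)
The plan is to unwind the definition of a final state and then invoke Proposition~\ref{prop:state} essentially verbatim. Let $q=\cell{f}{g}$ be a final state. By definition, $\assign{f}{i}{j}\not=0$ for every cell $\cell{i}{j}$, so $f$ assigns to each cell a single nonzero digit; this already matches the first requirement for a Sudoku solution, namely that each cell carries exactly one digit.

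The second step is to check the group constraint. Fix any group $G$ and any two cells $\cell{i}{j},\cell{i'}{j'}\in G$ with $\assign{f}{i}{j}=\assign{f}{i'}{j'}$. Proposition~\ref{prop:state} forces either $\cell{i}{j}=\cell{i'}{j'}$ or $\assign{f}{i}{j}=0$; the latter is impossible in a final state, so the two cells coincide. Hence the restriction of $f$ to $G$ is injective, which says that no digit is repeated within $G$.

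Finally, I would upgrade ``no repetition'' to ``each digit exactly once'' by a counting argument: each group has exactly $9$ cells and the nonzero digits are exactly $1,\ldots,9$, so an injection from $G$ into $\{1,\ldots,9\}$ is automatically a bijection. Therefore every digit occurs exactly once in $G$, and since $G$ was arbitrary, the filled grid described by $f$ meets all Sudoku constraints, i.e., it is a solution.

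I do not expect any genuine obstacle here; the whole content is carried by Proposition~\ref{prop:state}, and the only additional ingredient is the pigeonhole observation that $|G|=9$. The single point worth stating explicitly is that conditions $S_1$ and $S_2$ need not be reinvoked, as they were already consumed in the proof of Proposition~\ref{prop:state}.
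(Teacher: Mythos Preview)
Your proof is correct and follows the paper's approach exactly: invoke Proposition~\ref{prop:state} together with the final-state condition $\assign{f}{i}{j}\neq 0$ to rule out repeated digits in any group. You have simply been more explicit than the paper about the pigeonhole step (injection on nine cells into nine digits is a bijection), which the paper leaves implicit.
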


\begin{proof}
Let $q=\cell{f}{g}$ be a final state.
Since $f$ is a function, each cell must have a single digit.
From Proposition~\ref{prop:state} and $\assign{f}{i}{j}>0$, it follows that each digit appears once in every group.
\end{proof}

Notice that this corollary holds no matter what state transition relation is given.

\begin{prop}\label{prop:bound}
Let $R$ be a state transition relation.
Let $q_{0}=\cell{f_0}{g_0},\ldots,q_{k}=\cell{f_k}{g_k}$ be a sequence of states such that $q_s R q_{s+1}$ for all $s\in\{0,\ldots,k-1\}$.
If $g_s\not =g_{s+1}$ for all $s\in\{0,\ldots,k-1\}$, then $k\leq 648$.
\end{prop}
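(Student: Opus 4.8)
The plan is to track a monotone potential, namely the total number of candidates over the whole grid, and to show it strictly decreases at each step while remaining between fixed bounds.

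Concretely, I would associate to each state $q=\cell{f}{g}$ the quantity $\Phi(q)=\sum_{\cell{i}{j}}\left|\assign{g}{i}{j}\right|$, where the sum ranges over all $81$ cells. Two uniform bounds are immediate. Since each $\assign{g}{i}{j}$ is a set of nonzero digits, $\Phi(q)\leq 81\cdot 9=729$. Since condition $S_1$ forces $\assign{g}{i}{j}\neq\emptyset$ for every cell, $\Phi(q)\geq 81$.

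Next I would use the monotonicity required of the state transition relation: if $qRq'$ and $n\not\in\assign{g}{i}{j}$ in $q$, then $n\not\in\assign{g}{i}{j}$ in $q'$; equivalently, for every cell the candidate set in $q'$ is contained in the candidate set in $q$. Applied to each consecutive pair $q_s R q_{s+1}$, this gives $g_{s+1}\left(\cell{i}{j}\right)\subseteq g_s\left(\cell{i}{j}\right)$ for all cells, hence $\Phi(q_{s+1})\leq\Phi(q_s)$. Moreover, $g_s\neq g_{s+1}$ means these candidate sets differ at some cell, where the inclusion is then proper and thus strict in cardinality; therefore $\Phi(q_{s+1})\leq\Phi(q_s)-1$.

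Telescoping over $s\in\{0,\ldots,k-1\}$ gives $\Phi(q_k)\leq\Phi(q_0)-k$, so $k\leq\Phi(q_0)-\Phi(q_k)\leq 729-81=648$. There is no genuine obstacle in this argument; the only step deserving a moment of care is the inference from ``candidate sets only shrink'' together with $g_s\neq g_{s+1}$ to the strict decrease $\Phi(q_{s+1})\leq\Phi(q_s)-1$, which comes down to noting that a finite set properly contained in another has strictly smaller cardinality, applied at the cell where $g_s$ and $g_{s+1}$ disagree. Note that the other assumptions on $R$ are not needed for this bound.
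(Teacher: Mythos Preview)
Your proof is correct and essentially identical to the paper's: the paper defines $m_s$ as the number of triples $(i,j,n)$ with $n\in\assign{g_s}{i}{j}$, which is precisely your $\Phi(q_s)$, and then asserts $729\geq m_0>m_1>\cdots>m_k\geq 81$ to conclude $k\leq 648$. You have been more explicit than the paper in justifying the strict decrease via the monotonicity clause of the state transition relation, which the paper glosses over with ``Clearly''.
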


\begin{proof}
Let $m_s$ be the number of all triples $\left (i,j,n\right )$ such that $n\in\assign{g_s}{i}{j}$.
Clearly, $729=9^3\geq m_0>m_1>\cdots>m_k\geq 81$.
Hence $k\leq 648$.
\end{proof}

\section{Our Method}\label{sec:ourmethod}
In this section, we propose an exact method for the strategy-solvable Sudoku clues problem (SSC).
As strategies allowed in completing Sudokus, this paper employs naked singles, hidden singles, and locked candidates, which are quite basic yet enough powerful to complete many Sudokus.

\subsection{Overview}
The key of our method is to encode a given SSC instance to an equivalent constraint satisfaction problem (CSP) instance and then to solve it using a generic CSP solver.
For the CSP encoding, we introduce variables for representing states of grid and other auxiliary variables.
Using such variables, we represent constraints for states, state transitions, particular strategies, and so on.

Any variable assignment that satisfies all such constraints substantially corresponds to a sequence of states $q_0,\ldots,q_k$ with $q_0$ strategy-solvable, as characterized in Definition~\ref{def:solvability}.
Here, the initial state $q_0$ represents an assignment of digits to clue cells and the whole sequence represents a history of evolving grids and applied strategies, which eventually reaches a Sudoku solution.
Hence, by applying a CSP solver to the encoded constraints, we eventually obtain strategy-solvable Sudoku clues if exist; otherwise, we eventually recognize the strategy-unsolvability of the clue cells, that is, no assignment of digits to the clue cells is strategy-solvable.

The details of our method are explained as follows.
At first, we present constraints for a state transition framework, which is almost independent of particular strategies.
As stated above, we introduce variables for representing states of grid and auxiliary variables.
We then represent various constraints using such variables.
After that, we shift to particular strategies. we present constraints for number-placements and constraints for number-removals.
We finally provide the whole picture of our method as well as some remarks.

\subsection{Constraints for A General State Transition Framework}
We introduce variables for representing states of grid.
We then present constraints for initial states, constraints for state transitions, constraints for final states.

\subsubsection{Variables}
Let $q_k=(f_k,g_k)$ be a state of grid in step $k\geq 0$, i.e., a state reachable from an initial state by $k$ transitions.
Let $i$ and $j$ be a row index and a column index, respectively.
The integer variable $\xvar{i}{j}{k}$ encodes the value of $f_k$ at cell $\cell{i}{j}$.
In other words, $\xvar{i}{j}{k}$ takes $0$ if no digit is placed in $\cell{i}{j}$, and it takes a nonzero digit $n$ if $n$ is placed in $\cell{i}{j}$.
The Boolean variable $\yvar{i}{j}{n}{k}$ encodes whether the set of candidates, $\assign{g_k}{i}{j}$, at cell $\cell{i}{j}$ includes a nonzero digit $n$.
Here $\yvar{i}{j}{n}{k}$ is true if and only if $n\in\assign{g_k}{i}{j}$.
We call variables of the forms $\xvar{i}{j}{k}$ and $\yvar{i}{j}{n}{k}$ \emph{$X$-variables} and \emph{$Y$-variables}, respectively.

The remaining variables are Boolean variables of the form $\zvar{m}$, which are used for auxiliary purpose.
We call them \emph{$Z$-variables}.
Each $Z$-variable is introduced per particular condition for a number-placement or a number-removal.
The parameter $m$ is simply an identifier for distinguishing between a large number of particular conditions.
We thus number all such conditions in an arbitrary order.

By the way, since $X$-variables and $Y$-variables have $k$ in their parameters, we have to fix the maximum number of steps, $K$, in advance.
In principle, it suffices to let $K=649$.
This is shown later.
Since this is too large in practice, we will propose a practical method for doing with a smaller $K$ while keeping the exactness in Section~\ref{sec:improvements}.

\subsubsection{Constraints for Initial States}
The constraints for initial states are as follows.
For all clue cells $\cell{i}{j}$,
\begin{equation}\label{eq:init1}
\xvar{i}{j}{0} \not = 0,
\end{equation}
and for the other cells $\cell{i}{j}$,
\begin{equation}\label{eq:init2}
\xvar{i}{j}{0} = 0.
\end{equation}

Here we would not explicitly impose any constraint on $Y$-variables in step $0$ because such constrains can be derived from the constraints for $Y$-variables of an arbitrary step and the constraints for number-removals.

\subsubsection{Constraints for State Transitions}
The constraints for state transitions represent how $X$-variables and $Y$-variables in the current grid are determined from the previous grid.

The constraint that allows us to determine a nonzero digit $n$ at cell $\cell{i}{j}$ in step $k$ is as follows.

\begin{equation}\label{eq:xvar}
\xvar{i}{j}{k} = n \leftrightarrow \bigvee \zvar{m}
\end{equation}
Here, $\zvar{m}$ runs over all possible $Z$-variables that can derive the left hand side.
The only-if part means that if $\xvar{i}{j}{k}=n$, then there must be an evidence, i.e., some $Z$-variable is true.
One such $Z$-variable is the case for which $n$ is determined at $\cell{i}{j}$ in the previous step.

\begin{equation}\label{eq:xvar_prev}
\zvar{m} \leftrightarrow \xvar{i}{j}{k-1} = n
\end{equation}
As noticed before, suppose that the parameter $m$ is uniquely determined in order to distinguish the condition in the right hand side from the other particular conditions for $Z$-variables.
Hereafter we will follow the same implicit agreement whenever $\zvar{m}$ appears in constraints.
The other $Z$-variables of Formula~\ref{eq:xvar} will be detailed later according to particular cases.
Notice that as discussed in Section~\ref{sec:formulation} in order for our assumption for state transitions to be met, it is sufficient to let $\zvar{m}$ be determined by only $Y$-variables in the previous step (except for Formula~\ref{eq:xvar_prev}).

The constraint that allows us to rule out a nonzero digit $n$ as a candidate at cell $\cell{i}{j}$ in step $k$ is as follows.

\begin{equation}\label{eq:yvar}
\lnot\yvar{i}{j}{n}{k} \leftrightarrow \bigvee\zvar{m}
\end{equation}

In the same way as $X$-variables, $\zvar{m}$ runs over all possible $Z$-variables that can derive the left hand side.
One such $Z$-variable is the case for which $n$ is ruled out as a candidate at $\cell{i}{j}$ in the previous step.

\begin{equation}
\zvar{m} \leftrightarrow \lnot\yvar{i}{j}{n}{k-1}
\end{equation}
The other $Z$-variables of Formula~\ref{eq:yvar} will be detailed later according to particular cases.

\subsubsection{Constraints for Final States}
The constraints for final states are those that allow us to reject the current state as soon as it turns out that there is no chance to reach final states.
There are three cases for the current state of grid.
\begin{enumerate}
\item All cells are completed.
\item Some empty cells remain, and some candidates have been ruled out from the previous grid.
\item Some empty cells remain, and no candidate has been ruled out from the previous grid.
\end{enumerate}
Clearly the first case must be accepted.
For the third case, our assumption of state transitions (see Section~\ref{sec:formulation}) implies that the current incomplete grid cannot be changed anymore.
Hence, the third case must be rejected immediately.
For the second case, since applicable strategies may remain for the current grid, the decision must be postponed to the next grid.
The following formula is in change of this filtration.

\begin{equation}\label{eq:final}
\bigwedge_{\substack{0\leq i,j<9\\ 1\leq n\leq 9}}\Bigl ( \yvar{i}{j}{n}{k-1}\leftrightarrow \yvar{i}{j}{n}{k} \Bigr )
\rightarrow \bigwedge_{0\leq i,j <9} \xvar{i}{j}{k}\not=0 
\end{equation}

Suppose that the current grid is in step $k\ (\geq 1)$.
The left hand side means that no candidate has not been ruled out from the previous grid, and the right hand side means that the current grid is completed.
It is clear that this formula rejects only the third case among the three cases.

As announced earlier, we here show that $K=649$ is sufficient.
Let $q_0=\cell{f_0}{g_0},\ldots,q_k=\cell{f_k}{g_k},\ldots$ be any sequence of states such that $q_0$ is an initial state and $q_k$ is the next state of $q_{k-1}$ for all $k$.
The assumption for state transitions (see Section~\ref{sec:formulation}) ensures that once $g_{\bar{k}-1}=g_{\bar{k}}$ for some $\bar{k}$, it follows that $q_{\bar{k}}=q_k$ for all $k\geq \bar{k}$.
From Proposition~\ref{prop:bound}, the minimum $\bar{k}$ with $g_{\bar{k}-1}=g_{\bar{k}}$ is bounded from above by $649$.
Therefore it suffices to let $K=649$.
Notice that $\bar{k}\leq K$ is necessary in order to decide whether $q_{\bar{k}}$ is a final state.

\subsection{Constraints for Particular Number-placements}
We explain particular cases for $\zvar{m}$s in Formula~\ref{eq:xvar} that we have put off.

\subsubsection{Naked Singles}
The right hand side of the following formula is the condition for which a naked single allows us to determine a nonzero digit $n$ at cell $\cell{i}{j}$ in step $k\ (\geq 1)$.
Notice that in order for this to take effect, it is necessary that the following $\zvar{m}$ is registered to Formula~\ref{eq:xvar}.

\begin{equation}\label{eq:f}
\zvar{m} \leftrightarrow \bigwedge_{\substack{n\not = n'\\ 1\leq n'\leq 9}} \lnot\yvar{i}{j}{n'}{k-1}
\end{equation}

The right hand side means that all digits but $n$ are ruled out as candidates at cell $\cell{i}{j}$ in the previous step.
There are two cases: either only $n$ remains as a candidate or no digit remains.
Although a naked single is applicable only to the former case, there is no substantial problem.
Because even if the latter case occurs, we obtain $\xvar{i}{j}{k}=n$ from Formulas~\ref{eq:f} and~\ref{eq:xvar}, and we also obtain $\xvar{i}{j}{k}=n'$ for any other digit $n'$ in the same way, which is a contradiction.

\subsubsection{Hidden Singles}
The right hand side of the following formula is the condition for which a hidden single allows us to determine a nonzero digit $n$ at cell $\cell{i}{j}$ in step $k\ (\geq 1)$.

\begin{equation}\label{eq:hs}
\zvar{m} \leftrightarrow \bigwedge_{\cell{i'}{j'}\in G\setminus\cell{i}{j}} \lnot\yvar{i'}{j'}{n}{k-1}
\end{equation}
Here, let $G$ be any group such that $\cell{i}{j}\in G$.
There are three cases for $G$, and for each case the corresponding constraint must be generated from Formula~\ref{eq:hs}.
The right hand side of Formula~\ref{eq:hs} does not exclude the case for which $n$ is ruled out over all cells in $G$, but there is no substantial problem, just like Formula~\ref{eq:f}.

\subsection{Constraints for Particular Number-removals}
We in turn explain particular cases for $\zvar{m}$s in Formula~\ref{eq:yvar} that we have put off.

\subsubsection{Conditions for States of Grid}
We have imposed the two conditions $S_1$ and $S_2$ on states of grid in order to reject states that violate the rules of a Sudoku puzzle.
We introduce constraints corresponding to $S_1$ and $S_2$.
Unlike the other constraints, the constraints below determine a relation between $X$-variables and $Y$-variables in the same step.

The following formula corresponds to $S_1$, which ensures that all cells $\cell{i}{j}$ have at least one candidate in all steps $k\geq 0$.

\begin{equation}\label{eq:s1}
\bigvee_{1\leq n\leq 9} \yvar{i}{j}{n}{k}
\end{equation}

The following formula in turn corresponds to one of the two cases in the antecedent of $S_2$, that is, a different digit $n'$ from $n$ is determined at cell $\cell{i}{j}$.

\begin{equation}\label{eq:s2-1}
\zvar{m} \leftrightarrow \bigvee_{\substack{n\not = n'\\ 0\leq i,j< 9}} \xvar{i}{j}{k}=n'
\end{equation}
Here, suppose that $\zvar{m}$ is registered in Formula~\ref{eq:yvar}.
If the right hand side of Formula~\ref{eq:s2-1} holds, then it follows from Formulas~\ref{eq:yvar} and~\ref{eq:s2-1} that $n$ is ruled out as a candidate at $\cell{i}{j}$ in the same step $k$, which is the consequent of $S_2$.

The following formula corresponds to the other case in the antecedent of $S_2$, that is, there is a different cell $\cell{i'}{j'}$ from $\cell{i}{j}$ with both in  a common group $G$ such that $n$ is determined at $\cell{i'}{j'}$.

\begin{equation}\label{eq:s2-2}
\zvar{m} \leftrightarrow \bigvee_{\cell{i'}{j'}\in G\setminus\cell{i}{j}} \xvar{i'}{j'}{k}=n
\end{equation}
Here, let $G$ be any group such that $\cell{i}{j}\in G$.
There are three cases for $G$, and for each case the corresponding constraint must be generated from Formula~\ref{eq:s2-2}.
Just like Formula~\ref{eq:s2-1}, Formula~\ref{eq:s2-2} together with Formula~\ref{eq:yvar} imply that $n$ is ruled out as a candidate at $\cell{i}{j}$ in the same step $k$, which is the consequent of $S_2$.

\subsubsection{Locked Candidates}
The right hand side of the following formula is the condition for which a locked candidate allows us to rule out a nonzero digit $n$ at cell $\cell{i}{j}$ in step $k\ (\geq 1)$.

\begin{equation}\label{eq:lc}
\zvar{m} \leftrightarrow \bigvee_{\cell{i'}{j'}\in A\setminus B} \lnot \yvar{i'}{j'}{n}{k-1}
\end{equation}
Here, let $A$ and $B$ be any groups such that $\cell{i}{j}\in B\setminus A$ and $|A\cap B|=3$.
For all possible combinations of $A$ and $B$, the corresponding constraint must be generated from Formula~\ref{eq:lc}.
Notice that Formula~\ref{eq:lc} can be commonly used to deduce $\yvar{i''}{j''}{n}{k}$ for all $\cell{i''}{j''}\in B\setminus A$, by which the size of constraints can be largely reduced.

\subsection{Remarks}
We have presented a CSP encoding for the strategy-solvable Sudoku clues problem.
That is, given a set of cells, there are clues for the cells such that the grid can be completed using naked singles, hidden singles, and locked candidates if and only if there is an assignment of $X$-variables, $Y$-variables, and $Z$-variables that satisfies all the constraints generated from Formulas~\ref{eq:init1} to~\ref{eq:lc}.
A satisfying assignment substantially corresponds to a sequence of states $q_0,\ldots,q_k$ with $q_0$ strategy-solvable, as characterized in Definition~\ref{def:solvability}.
By applying a CSP solver to the encoded constraints, we eventually obtain strategy-solvable Sudoku clues if exist; otherwise, we eventually recognize the strategy-unsolvability of the clue cells, that is, no assignment of digits to the clue cells is strategy-solvable.
The constraints for a general state transition framework are almost independent of particular strategies.
The only connection is via $Z$-variables in Formulas~\ref{eq:xvar} and~\ref{eq:yvar}.
In order to include additional strategies, it is sufficient to model these strategies as logical formulas independently and then register new $Z$-variables for them to Formulas~\ref{eq:xvar} or~\ref{eq:yvar}.

\section{Two Improvements}\label{sec:improvements}
In this section, we present two improvements for our method.

\subsection{Reduction of Constraint Size}
The biggest issue of our method is that there are hundreds of thousands of constraints.
A simple way for reducing a constraint size is to eliminate constraints concerning clue cells.
For all clue cells, digits are determined in an initial state, and the determined digits do not change over all succeeding steps.
Hence, all constraints necessary for clue cells in each step is take the same values as those in the previous step.
In the constraints for final states, there is no need to examine the values of $X$-variables and $Y$-variables for clue cells.
However, we must not ignore the constraints for clue cells in step $0$ corresponding to Formulas~\ref{eq:s1},~\ref{eq:s2-1}, and~\ref{eq:s2-2} because otherwise, we could not reject initial states violating the rules of Sudoku puzzle.

\subsection{Incremental Approach}
Our method requires to fix a maximum step size, $K$, in advance.
Since it has a significant impact on a constraint size, $K$ needs to be as small as possible.
However, for a small $K$, our method may return false clues.
To see this, let us take a look at Figure~\ref{fig:unsolvable}.
Suppose that the left grid is an initial state and the right grid is a state in step $K$.
The right grid is not yet completed, but no constraint has been violated up to step $K$.
Since there is no further step, our method returns the initial grid, even if only a naked single strategy is allowed.
Notice that if the right grid was in step $K-1$ or less, our method would reject the initial grid (because the constraints for final states are violated).

We propose a practical method for doing with a smaller $K$ while keeping the exactness.
A basic idea is to repeatedly apply our original method while incrementing $K$ from an initial number $K_{min}$ until $K$ exceeds a sufficiently large number $K_{max}$.
Algorithm~\ref{alg:inc} is a pseudo code for the method.
Notice that once constraints become unsatisfiable, constraints with any lager maximum step size must be unsatisfiable.

\begin{algorithm}[t]
  \caption{Incremental approach for the strategy-solvable Sudoku clues (SSC)}
  \begin{algorithmic}\label{alg:inc}
\FOR {$K=K_{min}$ to $K_{max}$}
\STATE Generate constraints for a given SSC instance with maximum step $K$.
\IF {the set of constraints is unsatisfiable}
\STATE return UNSAT
\ELSIF {the grid in step $K$ is completed}
\STATE return Sudoku clues in the initial grid.
\ENDIF
\ENDFOR
  \end{algorithmic}
\end{algorithm}

\section{Application}\label{sec:application}
Besides the clue generation, we present an application of our method to the problem of determining the minimum number of Sudoku clues that are solvable with a given set of strategies.


Our method is easily customized with a small modification as follows.
Remove Formulas~\ref{eq:init1} and~\ref{eq:init2}.
Instead, introduce integer variables taking $0$ or $1$, $\uvar{i}{j}$, for all cells $\cell{i}{j}$ and the following formula.
\begin{equation}
\uvar{i}{j} = 1 \leftrightarrow \xvar{i}{j}{0} \not = 0
\end{equation}
Introduce the following formula for ensuring that the number of determined digits in step $0$ is less than or equal to a threshold $\theta$.
\begin{equation}
\sum \uvar{i}{j} \leq \theta
\end{equation}
Here the summation runs over all variables $\uvar{i}{j}$.

In order to compute the minimum number, it is sufficient to repeatedly solve the modified constraints while decreasing $\theta$ one by one until the constraints become unsatisfiable.

\section{Experiments}\label{sec:experiments}
In this section, we conduct experiments to evaluate our method.

\subsection{Common Settings and Remarks}
Our method uses Sugar version 2.3.3~\cite{Sugar}~\cite{tamura2008sugar} as a CSP solver and MiniSat version 2.0~\cite{Minisat}~\cite{10.1007/978-3-540-24605-3_37} as a SAT solver internally invoked by Sugar.
The maximum step size $K$ is set to $30$, and our method is applied only once (i.e., no use of incremental approach) because in preliminary experiments, we have confirmed that $K=30$ is sufficient for many instances including those used in this experiments.
The computational environment is as follows.
\begin{itemize}
\item[] OS: Ubuntu 18.04.4 LTS
\item[] Main memory: 16GB
\item[] CPU: Intel~\textregistered\ Core~\texttrademark\ i7-4600U 2.10GHz
\end{itemize}

The implementation of our method is checked for all possible arrangements of $3$ and $4$ clue positions for $4\times 4$ Sudoku.
These instances are so small that a naive brute force search can quickly decide.
It is confirmed that the solvability results for our method completely coincide with those for the brute force search over all instances.
Here all strategies allowed to be used are naked singles, hidden singles, and locked candidates.
It turns out that there is no strategy-solvable instance with $3$ clue positions, and there are exactly $704$ strategy-solvable instances with $4$ clue positions.
By the way, all of the $704$ instances are also solvable with naked singles only.
Since there is no proper $4\times 4$ Sudoku with $3$ clues, there is no gap between the minimum numbers of proper Sudokus and strategy-solvable Sudokus with naked singles.

The implementation of (the CSP encoding part of) our method, tools such as the brute force search program, and all instances used in the experiments are publicly available in our website\footnote{\url{http://www.disc.lab.uec.ac.jp/toda/code/scg.html} , accessed in May 27th 2020.}.

\subsection{Running Time Comparison}
We compare our CSP-based method (CSP) with the generator of Zama and Sasano~\cite{weko_157970_1} (ZS).
The time limit is set to $600$ seconds.
In both methods, only naked singles, hidden singles, and locked candidates are allowed to be used.
Although other strategies are implemented in the generator of Zama and Sasano, we restricts the program to the three strategies by modifying their program.

We make total $100$ input instances (i.e.\ sets of cells) by repeating the following procedure:
select a number $n$ from $20$ to $79$ at random, and generate distinct $n$ cells at random.
Table~\ref{tab:data} shows the distribution of input instances with respect to the number of clues.
All instances are confirmed to be strategy-solvable.

\begin{table*}[t]
\begin{center}
\caption{Distribution of input instances with respect to the number of clues}
\begin{tabular}{l|cccccc}\hline\hline
\#clues& $20-29$ & $30-39$ & $40-49$ & $50-59$ & $60-69$ & $70-79$\\ \hline
\#instances  & $19$    & $30$    & $20$    & $10$    & $11$    & $10$
\end{tabular}
\label{tab:data}
\end{center}
\end{table*}

\begin{figure}[H]
  \centering
\includegraphics[keepaspectratio, width=12cm]{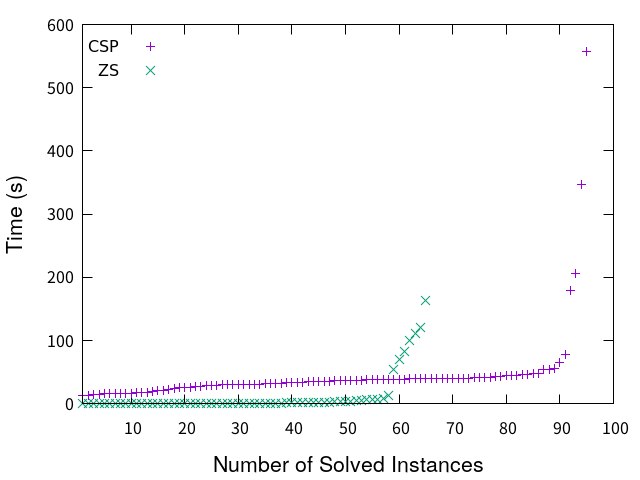}
\caption{Comparison of running time}
\label{graph}
\end{figure}

Figure~\ref{graph} shows a cactus plot of running time comparison.
Each point is plotted so that the $x$-coordinate is the number of instances solved within the time specified in the $y$-coordinate.
The points for our method (CSP) and the generator of Zama and Sasano (ZS) are indicated with $+$ and $\times$, respectively.
The curve formed by the points for the same method shows an increase in the number of solved instances over time.

Our method solves $95$ of total $100$ instances within the time limit, while the generator of Zama and Sasano solves $65$ instances.
The curve of our method shows a linear increase up to around $90$ in the $x$-axis.
All instances in this range are solved within about $1$ minutes.
The curve grows rapidly after $90$.
All instances in this rage as well as unsolved instances have $25$ or less cells.

On the other hand, the generator of Zama and Sasano often can find solutions much faster even in near $20$ cells as the curve of Zama-Sasano almost overlaps the $x$-axis up to near $60$.
All instances in this range have around $45$ or less cells.
After that, the performance drops significantly around $60$ as the curve sharply increases.
All instances with $50$ or more cells cannot be solved within the time limit.

\subsection{Evaluation on Unsolvability}
To see the efficiency for deciding unsolvablity, we make instances which seem to be on the border between solvable cases and unsolvable cases in such a way that we randomly select $30$ minimum Sudokus from the collection of Gordon F.\ Royle~\cite{minimumsudoku} and extract only clue positions, simply forgetting placed digits.
In this experiment, all strategies allowed to be used are naked singles only.
Notice that although no grid solvable with naked singles is included in the collection, it is possible that changing digits but in the same positions makes grids solvable.

Within several hours, our method terminates for $14$ of the $30$ instances, and all of them are confirmed to be not strategy-solvable.
Tables~\ref{tab:min30-1} and~\ref{tab:min30-2} in Appendix show the chosen minimum Sudokus and the running times.
Although $10$ instances take merely from $10$ to $20$ minutes, other instances take much more time.
It is still unknown whether there are $17$ clues solvable with naked singles only.
It appears far from the settlement of the problem on a gap betwen the numbers of strategy-solvable Sudokus and proper Sudokus by a computer program, considering a large number of possible arrangements of $17$ positions.

\section{Conclusion}\label{sec:conclusion}
Sudokus that appear in books, newspapers, etc often have a regular pattern in the arrangement of initial digits and they are typically made so that all empty cells can be completed using some known techniques, called strategies.
We formally defined the problem of generating such Sudoku instances by introducing the concept of strategy-solvability, which means that all empty cell can be filled in with digits using only a given set of strategies.
We proposed an exact method for solving this problem.
The key is to encode a given problem instance into an equivalent CSP instance and then solve it by applying a CSP solver.

There are a few existing researches, but all of them are based on the generate-and-test method, which repeats to generate a set of clues and then test whether it is strategy-solvable.
There are some drawbacks such as not being able to recognize that a specified set of cells is strategy-unsolvable.

To the best of our knowledge, our method is the first exact method except for the trivial brute-force search.
Our method eventually can find strategy-solvable Sudoku clues if exist, and otherwise, our method eventually can recognize strategy-unsolvability.
Besides the clue generation, we presented an application of our method to the problem of determining the minimum number of strategy-solvable Sudoku clues, demonstrating that our method is easily customized with a small modification.

We conducted experiments to compare our method with the generator of Zama and Sasano, using grids varying the positions and the numbers of clues at random.
From the results we observed that our method terminated within $1$ minutes for many grids, showing our method being stable in terms of running time.
However, as the number of clues got closer to $20$, the running time rapidly increased and exceeded the time limit, which is set to $600$ seconds.
On the other hand, the generator of Zama and Sasano often could find solutions much faster even in near $20$ clues, while the performance sharply deteriorated around $45$ clues and exceeded the time limit for all grids with more clue positions.
We also evaluated our method for several instances with $17$ clue positions taken from known minimum Sudokus to see the efficiency for deciding unsolvability.
It remains as future work to improve our method in less clues.

\section*{Acknowledgement}
This work was supported by JSPS KAKENHI Grant Number 17K17725.

\appendix
\begin{landscape}
\begin{table}
\caption{Part 1: The former half of the randomly chosen $30$ minimum Sudokus and the running times (in seconds): each sequence of $9$ digits is separated by a period, and such $9$ sequences in each line represent the rows of a grid.}

\begin{tabular}{cr}\label{tab:min30-1}
Grids & Time (s) \\ \hline
090600000.000080300.000000010.060000800.000205000.000041000.000300702.401000000.500000000 & -\\
050608000.300000070.000000000.000400601.700100500.200000000.061000000.000070020.000090000 & 926\\
000600370.801000000.000200000.070010060.000004500.200080000.060700000.000050800.000000000 & -\\
005000060.000780000.000000000.200000407.001300000.000000800.000601030.040070000.580000000 & -\\
031000000.000400006.000000200.600059000.000010030.400000000.000200800.050000010.700600000 & 1,243\\
000000025.000601000.090000000.805000600.000020000.000000300.040250000.300000790.000800000 & 1,255\\
600500300.000000010.000000000.000000596.010024000.000000000.704000800.000210000.300900000 & -\\
000010600.050000030.000080000.700500020.000002000.008000000.530900000.000400807.000000100 & -\\
000530800.700600000.400000000.100024000.000000630.000000000.050301000.000000042.080000000 & 1,070\\
070060030.500400100.000000000.400501000.300000076.000800000.001000500.060020000.000000000 & -\\
000700380.501000000.000200000.000000506.070400000.000000900.300056000.080010040.000000000 & -\\
603001020.000800500.200000000.050040700.000003000.000200000.040750000.000000031.000000000 & 994\\
000030001.007500000.600000000.810002000.000600350.400000000.003000760.040080000.000000000 & -\\
500300000.000000801.004000600.000600430.710000000.000500000.200000060.000078000.000010000 & 6,564\\
080071000.000040600.000000000.040000008.000600010.200500000.603000500.500200000.000080000 & 1,134\\
\end{tabular}
\end{table}
\end{landscape}

\begin{landscape}
\begin{table}
\caption{Part 2: The latter half of the randomly chosen $30$ minimum Sudokus and the running times (in seconds): each sequence of $9$ digits is separated by a period, and such $9$ sequences in each line represent the rows of a grid.}

\begin{tabular}{cr}\label{tab:min30-2}
Grids & Time (s) \\ \hline
400000076.000081000.000000000.000630004.500000200.017000000.320400000.000000810.000000000 & -\\
020540000.040000006.000000010.080700500.900020000.000006000.603000000.000300200.100000000 & 26,835\\
500060107.030200000.400000000.280000030.000007000.000010000.000800020.000400600.001000000 & -\\
400010000.060000020.000000000.000500270.301400000.008000000.000600100.070002000.100000003 & -\\
008000200.400050000.000600070.000082000.060000050.000300000.950100000.000000306.000000800 & -\\
010000300.000042000.000090000.000800100.205000000.600000004.000310000.900000020.000700050 & 10,090\\
050000200.000700010.600080000.012000050.000600040.000030000.900000308.000001000.000000600 & 852\\
600050043.200000007.000400010.070200000.000060200.010000000.500000800.000730000.000000000 & -\\
304050600.000200000.600000000.080000072.000031000.000000000.120700000.000000340.000000009 & 889\\
040000300.000072000.000010800.200000010.050700000.000050000.000800400.701000000.600300000 & -\\
400080000.000500100.000000200.000034070.001000000.060000000.750000030.000001640.000200000 & 1,021\\
000042500.100000070.000000000.400700000.000000208.000000650.025000000.000830000.060100000 & 869\\
600800000.000090500.000000020.025000700.090000300.000400001.100300008.000050060.000000000 & 849\\
200400000.000000031.000000007.000702500.301000000.900800000.080000400.000030090.070000000 & -\\
007000010.400700000.000800030.200000400.000010000.000300000.000002709.530000000.080000600 & -\\
\end{tabular}
\end{table}
\end{landscape}

\bibliographystyle{plain}
\bibliography{sudoku} 
\end{document}